\definecolor{darkblue}{RGB}{0,0,127} 
\definecolor{darkgreen}{RGB}{0,150,0}
\newtheorem{theorem}{Theorem}
\newtheorem{lemma}[theorem]{Lemma}
\newtheorem{definition}[theorem]{Definition}
\renewenvironment{proof}[1][Proof]{\noindent\textbf{#1.} }{\ $\Box$}
\def\Z{\mathbb{Z}}
\newcommand{\ket}[1]{|{#1}\rangle}
\def\micronodesize{4pt}
\def\edgethickness{very thick}
\def\poscolor{blue!60!black}
\def\negcolor{red!30!yellow}
\def\unitcolor{\poscolor}
\def\evencolor{black}
\def\oddcolor{white}
\def\phantomcolor{gray!90}
\def\phantomfade{very nearly transparent}
\newcommand{\reals}{\mathbb{R}}
\newcommand\tp{\mathrm{T}}
\newcommand\herm{\mathrm{H}}
\newcommand\group[1]{\mathrm{#1}}
\newcommand\algebra[1]{\mathfrak{#1}}
\tikzset{>=to}
\tikzset{micro-no-color/.style={
	circle,
	minimum size=\micronodesize,
	inner sep=0pt
	}
}
\tikzset{micro/.style={
	micro-no-color, ball color=black,
	}
}
\tikzset{micro-even/.style={
	micro-no-color, ball color=\evencolor,
	}
}
\tikzset{micro-odd/.style={
	micro-no-color, ball color=\oddcolor,
	}
}
\tikzset{poslink/.style={
	\edgethickness,
	draw=\poscolor,
	nearly opaque
	}
}
\tikzset{neglink/.style={
	\edgethickness,
	draw=\negcolor,
	nearly opaque
	}
}
\tikzset{unitlink/.style={
	\edgethickness,
	draw=\unitcolor,
	nearly opaque
	}
}
\tikzset{dashlink/.style={
	\edgethickness,
	draw=\unitcolor,
	dash pattern=on 0.75pt off 0.75pt,
	nearly opaque
	}
}
\tikzset{fulllink/.style={
	\edgethickness,
	draw=\unitcolor,
	opaque
	}
}
\tikzset{phantomlink/.style={
	\edgethickness,
	draw=\phantomcolor,
	\phantomfade
	}
}
\tikzset{optpath/.style={
	thick,
	draw=black
	}
}
\tikzset{optdelay/.style={
	optpath,
	decorate,
	decoration={coil,segment length=4pt,pre=lineto,pre length=1mm,post length=1mm}
	}
}
\tikzset{optdelaylong/.style={
	optpath,
	decorate,
	decoration={coil,segment length=2pt, pre=lineto, pre length=1mm,post length=1mm}
	}
}
\tikzset{nodehighlight/.style={
	semithick,
	red,
	fill=red,
	semitransparent
	}
}
\def\latticeoptsscale{0.8}
\tikzset{latticeopts/.style={
	baseline=-\latticeoptsscale*1.5cm-1ex,
	x= \latticeoptsscale*3.375cm,
	y= \latticeoptsscale*3.75mm,
	z= \latticeoptsscale*20.25mm,
	inner sep=0pt,
	outer sep=0pt
	}
}
\newcommand\piplus[4] %
\newcommand\piminus[4] %
\newcommand\beamsplit[4] %
\newcommand\tmslink[4] %
\newcommand\tmslinky[4] %
\newcommand\linearlink[4] %
\newcommand\Pizero[8] %
\newcommand\Pione[8] %
\newcommand\Pitwo[8] %
\newcommand\Pithree[8] %
\newcommand\BeamSplitx[8] %
\newcommand\BeamSplitz[8] %
\newcommand\tmslinkx[8] %
\newcommand\tmslinkz[8] %
\newcommand\linearlinklattice[8] %
\newcommand\beamsplitx[8] %
\newcommand\beamsplitz[8] %
\newcommand\bs[3] %
\newcommand\squeezedstate[4] %
\newcommand{\op}[1]{\hat{#1}}
\newcommand{\opvec}[1]{\op{\vec{#1}}}
\newcommand{\opmat}[1]{\op{\mat{#1}}}
\newcommand{\id}{I}
\newcommand{\mat}[1]{\bm{\mathrm{#1}}}
\renewcommand{\vec}[1]{\bm{\mathrm{#1}}}
\newcommand{\blk}{\color{black}}
\begin{document}
\title{Passive interferometric symmetries of multimode Gaussian pure states}
\author{Natasha Gabay}
\email{natasha.gabay@sydney.edu.au}
\affiliation{School of Physics, The University of Sydney, Sydney, NSW 2006, Australia}
\author{Nicolas C. Menicucci}
\email{ncmenicucci@gmail.com}
\affiliation{School of Physics, The University of Sydney, Sydney, NSW 2006, Australia}
\affiliation{School of Science, RMIT University, Melbourne, VIC 3001, Australia}

\date{\today}
\begin{abstract}
As large-scale multimode Gaussian states begin to become accessible in the laboratory, their representation
and analysis become a useful topic of research in their own right. The graphical calculus for Gaussian pure
states provides powerful tools for their representation, while this work presents a useful tool for their analysis:
passive interferometric (i.e., number-conserving) symmetries. Here we show that these symmetries of multimode
Gaussian states simplify calculations in measurement-based quantum computing and provide constructive tools
for engineering large-scale harmonic systems with specific physical properties, and we provide a general
mathematical framework for deriving them. Such symmetries are generated by linear combinations of operators
expressed in the Schwinger representation of U(2), called nullifiers because the Gaussian state in question
is a zero eigenstate of them. This general framework is shown to have applications in the noise analysis of
continuous-various cluster states and is expected to have additional applications in future work with large-scale
multimode Gaussian states.
\blk 
\end{abstract}
\pacs{03.67.-a,42.50.Ex}
\maketitle

\section{Introduction}\label{S:intro}
Optical quantum information science promises new and revolutionary technology~\cite{Weedbrook2012,Ralph2010}. Optical Gaussian states are states of light whose Wigner functions are Gaussian distributions over the real ($\op q$) and imaginary ($\op p$) parts of the complex quantised-mode (\textit{qumode}) amplitude, where $[\op q,\op p]=i\hbar$~\cite{Olivares2012}. In the context of quantum information, such states have the desirable feature of being produced, manipulated, and measured with experimental ease using squeezing, linear optics, and homodyne detection~\cite{Wang2008}. 

Many important aspects of quantum-information technologies have already been realised in Gaussian systems, including quantum teleportation~\cite{Furusawa1998}, entanglement swapping~\cite{Jia2004}, quantum dense coding~\cite{Li2002,Jing2003}, entanglement purification~\cite{Duan1999}, measurement-based QC~\cite{Gu2009}, cluster state preparation~\cite{Yukawa2008}, quantum error correction~\cite{Aoki2009}, and quantum algorithms~\cite{Zwierz2010}. These capabilities, combined with the relative experimental ease with which Gaussian states are processed and measured, encourages a serious consideration of Gaussian states for quantum information tasks. Indeed, Gaussian---or continuous-variable (CV)~\cite{Lloyd1999}---cluster states can be generated efficiently in a highly scalable fashion and serve as resources for measurement-based QC~\cite{Gu2009,Menicucci2006,Flammia2009,Zaidi2008,Yukawa2008,Yokoyama2013,Chen:2014jx}. 

A well-known mapping known as the \emph{Schwinger representation}~\cite{Schwinger} maps the state space of a pair of harmonic oscillators to the representation space of $\group {SU}(2)$~\cite{Chaturvedi2006} and thus to a single quantum spin. In this context, pairs of qumodes within an $n$-qumode system are known as \textit{Schwinger spins}.
The Schwinger mapping has been a beneficial tool when applied to many areas in quantum optics, including
the $\group {SU}(2)$ symmetry of a beamsplitter~\cite{Yurke1986,Campos1989}, demonstrating violations of Bell inequalities in macroscopic systems ~\cite{Reid2002,Su1991,Simon2003,Gerry2005,Evans2011}, the formulation of angular momentum coherent states~\cite{Atkins1971}, polarization squeezing and entanglement~\cite{Korolkova}, and detecting entanglement of non-Gaussian states~\cite{Nha2006}. These set-ups have applied the Schwinger representation to systems of a few qumodes, generally constructing a single Schwinger spin or a pair of them.
Considering the fruitful applications of applying the Schwinger representation to systems of a few qumodes, it is of interest to investigate many-qumode optical Gaussian states within the Schwinger picture. Such a mapping could also be of use regarding CV simulation of entangled spin systems. Recent work in this direction has posed similar motivation, proposing the need for a straightforward map between CV and Schwinger spin systems~\cite{Sridhar:2014jc}. A stepping stone to achieving such a map was suggested in~\cite{Sridhar:2014jc}, whereby the nullifiers for various multipartite Schwinger spin systems were derived and interpreted. 
In this work, we present necessary and sufficient conditions that must be satisfied by a Schwinger spin operator if it is to nullify a particular Gaussian pure state. Each nullifier generates an interferometric symmetry satisfied by the corresponding state. We present two applications of these results. First, we use these symmetries to identify a class of CV quantum gates whose members are all guaranteed to have the same noise properties when implemented in a particular (and useful) CV measurement-based setting. Second, we construct a class of quadratic parent Hamiltonians for any given Gaussian pure state and prove that, in this context, the interferometric symmetries of the ground state get promoted to conserved quantities for the full dynamics of this system. These results therefore are likely to have applications in quantum information, condensed matter, and quantum optics.\blk 
\section{Motivation and background}\label{backs}
\subsection{The Schwinger representation of \texorpdfstring{$\group U(2)$}{U(2)}}\label{Schwingerintro}

The Schwinger representation of $\group {SU}(2)$ \cite{Schwinger} maps a pair of harmonic oscillators to a single spin whose spin quantum numbers are determined by the sum and difference of quanta in each of the oscillators. It is a bosonic realization of a Lie algebra and is a mutiplicity-free direct sum of all the unitary irreducible representations of $\group {SU}(2)$ \cite{Chaturvedi2006}. Here we summarize the usual approach and generalize it to the Schwinger representation of $\group U(2)$.

In an optical setting, the harmonic oscillators used in the Schwinger representation of $\group {SU}(2)$ are qumodes that occupy a Fock space---i.e., an infinite dimensional Hilbert space endowed with the orthonormal basis $\{\ket{n}\}_{n=0}^{\infty}$ and operators $\op a$, $\op a^\dag$ satisfying $[\op a,\op a^\dag]=1$. The Schwinger spin operators are
\begin{align}
\op S^x_{1,2}&=\frac{1}{2}(\op a^\dag_{1}\op a_{2}+\op a^\dag_{2}\op a_{1})\,,\nonumber\\
\op S^y_{1,2}&=\frac{1}{2i}(\op a_{1}^\dag\op a_{2}-\op a_{2}^\dag\op a_{1})\,,\label{spinops}\\
\op S^z_{1,2}&=\frac{1}{2}(\op a^\dag_{1}\op a_{1}-\op a^\dag_{2}\op a_{2})\,,\nonumber
\end{align}
where the $1,2$ subscripts denote qumodes $1$ and $2$.

The appropriate $\algebra{su}(2)$ Lie algebra relations are satisfied:
\begin{align}
[\op S^k,\op S^l]=i\sum\limits_m \epsilon_{klm}\op S^m\,,
\end{align}
where $\epsilon_{klm}$ is the Levi-Civita symbol.
It is also important to introduce the $\op S^0$ operator,  
\begin{align}
\op S^0_{1,2}&=\frac{1}{2}(\op a^\dag_{1}\op a_{1}+\op a^\dag_{2}\op a_{2})\,,\label{s0}
\end{align}
which is related to $\op S^2=(\op S^x)^2+(\op S^y)^2+(\op S^z)^2$, the Casimir invariant of $\algebra{su}(2)$, as
\begin{align}
\op S^2=\op S^0(\op S^0+1)\,.
\end{align}
Therefore, the $\op S^0$ operator corresponds to the spin quantum number $s$,
\begin{align}
\op S^0\mapsto s \,,
\end{align}
while $\op S^z$ corresponds to the spin quantum number $m_s$,
\begin{align}
\op S^z\mapsto m_s \,.
\end{align} 

Combining the Schwinger $\op S^0$ operator with the three Schwinger spin operators, the Schwinger representation of $\algebra u(2)$ is defined by the generators:
\begin{align}
\op S^j:=\frac{1}{2}\opvec a^\herm \mat \sigma_j \opvec a\,, \qquad j \in \{0,x,y,z\}\,,\label{schwingergens}
\end{align}
where $\mat \sigma_j$ is the identity or a Pauli matrix, respectively, and $\opvec a=(\op a_1,\op a_2)^\tp$ and $\opvec a^\herm=(\opvec a^\dag)^\tp=(\op a_1^\dag,\op a_2^\dag)$ are vectors of operators. As the Schwinger operators (\ref{schwingergens}) form a representation of $\algebra u(2)$, any element $\op U\in \group U(2)$ can be written as 
\begin{align}
\op U(\vec{\theta})=\exp(-i\vec{\theta}\cdot\opvec S)\,, \label{U(2)}
\end{align}
where $\opvec S=(\op S^0,\op S^x,\op S^y,\op S^z)$, $\vec{\theta}=(\theta_0,\theta_x,\theta_y,\theta_z)$, and ${\theta_i \in\reals}$ for $i\in\{0,x,y,z\}$.

\subsection{Passive interferometry}\label{sec:passiveinter}
To describe local interferometric symmetries of optical Gaussian states, %
we first note that only Hamiltonians that are quadratic in the quadrature operators correspond to Gaussian unitary transformations---i.e., unitary operations that preserve the Gaussian form of the state.

There are two varieties of quadratic Hamiltonians: \textit{compact}, which correspond to transformations that preseve the total photon number, and \textit{non-compact}, which do not preserve the total photon number. In this work we are concerned with \emph{passive transformations}---i.e., those that are generated by compact Hamiltonians. These can be employed with interferometry (beamsplitters and phase shifters). Operators that generate passive transformations on $n$ qumodes are known to correspond to the group $\group U(n)$, a subgroup of $\group{Sp}(2n,\reals)$ \cite{Simon1994}, the latter being the group of all Gaussian unitary operations.  Considering a system of two qumodes, any passive transformation corresponds to elements of $\group U(2)$ and thus can be expressed in the form of Eq.~\eqref{U(2)}. Generalising to an $n$-qumode system, any $\group U(n)$ transformation can be decomposed into a sequence of $\group U(2)$ operations acting on two-dimensional subspaces of the underlying $n$-dimensional Hilbert space \cite{Reck1994}. Therefore, any $n$-qumode passive interferometer can interpreted as a sequence of two-qumode passive interferometers defined in Eq.~\eqref{U(2)} acting on two-qumode subspaces of the system.

The $\op S^x$ and $\op S^y$ generators correspond to \textit{two-mode mixing} operators \cite{Schumaker1986} that couple the qumodes together. Such operators can be modelled as different beamsplitter configurations \cite{Campos1989}. The $\op S^z$ generator---the \textit{relative-phase-shift operator}---corresponds to an equal and opposite phase shift being applied to each qumode. %
The fourth generator, $\op S^0$, imparts an equal phase shift on both qumodes. The unitary generated by this operator consists of two individual qumode phase-space rotation operators $\op R(\theta)=\exp(-i\theta\op a^\dag_j \op a_j)$, where $j$ denotes qumode~$j$. This operator acting on an eigenstate of the number operator simply results in an overall phase factor of $e^{-i\theta n}$. However, in general, such an operator has a nontrivial effect on a state.

Concerning some previous investigations of interferometry using group theory \cite{Campos1989,Yurke1986,Leonhardt1993}, it has been sufficient to only use a representation of $\group {SU}(2)$. This is because in those cases the effects of interest were either interference effects (that only depend on the relative phase difference between qumodes and are thus not altered by the $\op S^0$ generator) and effects that can be measured with photodetectors (photodetectors are insensitive to the transformations generated by $\op S^0$). We are nonetheless interested in the full Schwinger $\group U(2)$ representation, which describes the most general interferometric operation. This requires accounting for overall phase shifts on each 2-qumode subspace because on the full $n$-qumode space these will now become relative phase shifts, which are physically significant.
\subsection{The graphical calculus for Gaussian pure states}\label{introducegraphicalcalculus}
The nullifier formalism is well suited for analysis of Gaussian pure states including CV cluster states  \cite{Gu2009,Menicucci2006,Menicucci2011}. Any Gaussian pure state (including CV cluster states) can be defined uniquely (up to phase-space displacements) by an undirected graph with complex-valued edge weights, given by the matrix $\mat Z=\mat V+i\mat U$, where $\mat U=\mat U^\tp>0$ and $\mat V=\mat V^\tp$ \cite{Menicucci2011}. Every such $\mat Z$ also defines a unique Gaussian pure state. The position-space wavefunction for a Gaussian pure state on $n$ qumodes is related to $\mat Z$ by:
\begin{align}
\psi_{\mat Z}(\vec q)=\frac{(\det \mat U)^{1/4}}{\pi^{n/4}}\exp\left(\frac{i}{2}\vec q^\tp \mat Z \vec q\right)\,.
\end{align}

The nullifiers for a Gaussian pure state are also captured within this formalism. For an $n$-qumode Gaussian pure state $\ket{\psi_{\mat Z}}$, $n$ independent nullifiers can be written succinctly as
\begin{align}
(\opvec p-\mat Z \opvec q)\ket{\psi_{\mat Z}}=\mat 0\,,\label{null}
\end{align}
where $\opvec p=(\op p_{1},\dotsc,\op p_n)^\tp$ and $\opvec q=(\op q_{1},\dotsc,\op q_n)^\tp$ are column vectors of quadrature operators.

We now introduce the complex, symmetric matrix $\mat K$ that is related to $\mat Z$ by \cite{Menicucci2011}
\begin{align}
\mat K=(\mat \id+i\mat Z)(\mat \id-i\mat Z)^{-1}\label{K}
\end{align}
and thus also uniquely specifies the state. The symmetry of $\mat K$ is an immediate consequence of the symmetry of $\mat Z$. 

The requirements on $\mat U$ and $\mat V$ mentioned above induce the following restriction on~$\mat K$:
\begin{align}
\lVert\mat K\rVert < 1\,,
\end{align}
where $\lVert\mat K\rVert$ is the spectral norm (i.e., the largest singular value) of~$\mat K$. For proof of this, see Appendix \ref{A2}. 

The uniqueness of $\mat K$ enables a useful graphical representation of the Gaussian pure state that it defines, whereby the graphical depiction of the state is simple, yet precise, as it represents the wavefunction for the corresponding state \cite{Menicucci2011}. We illustrate this graphical formalism with an example using a two-mode squeezed (TMS) state and its corresponding $\mat K$ matrix. A TMS state is an entangled, bipartite Gaussian state of two qumodes that exhibits EPR-like correlations \cite{Drummond1988}. %
The $\mat K$ matrix for a TMS state is \cite{Menicucci2011}
\begin{align}
\mat K=
\begin{pmatrix}
0 & \tanh \alpha\\
\tanh \alpha & 0 
\end{pmatrix}\label{Ktms}\,,
\end{align} 
where $\alpha>0$ is an overall squeezing parameter. Figure \ref{fig:TMS} displays the graphical representation of $\mat K$, whereby qumodes are depicted by nodes and the edge weight corresponds to the off-diagonal terms in $\mat K$.

\begin{figure}
\begin{center}
\beginpgfgraphicnamed{graphics/pairtms}%
\begin{tikzpicture} [scale=1,label distance=-4pt, pin distance=3.2em]
	\footnotesize
	\def\hgraph{(-1,0)}
	\def\labelloc{(-0.5*1,-2.2*1)}
	\def\spingraph{(4,0)}
	\def\symbolloc{(1,0.90)}
	\draw [ultra thick, color=blue!60!black](-2,0) -- (0,0);
   \shade[shading=ball, ball color=black] (-2,0) circle (3pt);   
    \node at (-1,0.6) [anchor=north] {$\tanh \alpha$};
  \shade[shading=ball, ball color=black] (0,0) circle (3pt);
\end{tikzpicture}
\endpgfgraphicnamed
\caption{\label{fig:TMS}Precise graphical representation (described in Section \ref{introducegraphicalcalculus}) of a two-mode squeezed (TMS) state with adjacency matrix $\mat K$ (given in Equation (\ref{Ktms})), where $\alpha>0$ is an overall squeezing parameter.}
\end{center}
\end{figure}
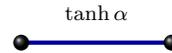

\begin{definition}
Any Gaussian pure state can be uniquely specified ---up to displacements and overall phase--- by the complex, symmetric matrix $ \mat K$, which we define to be the \emph{adjacency matrix} of the state.
\end{definition}
The nullifier relation of Equation (\ref{null}) can be written in terms of $\mat K$ and the qumode annihilation operators. First, we multiply Equation (\ref{null}) by $i$ on both sides, 
and then we use the relation between quadrature operators and annihilation operators, $\opvec a=\frac{1}{\sqrt{2}}(\opvec q+i\opvec p)$, to rewrite this expression as 
\begin{align}
(\opvec a-\opvec a^\dag)-i\mat Z(\opvec a+\opvec a^\dag)\ket{\psi_{\mat Z}}&=\mat 0\\
(\mat \id-i\mat Z)\opvec a-(\mat \id+i\mat Z)\opvec a^\dag\ket{\psi_{\mat Z}}&=\mat 0\,.
\end{align}
Multiplying both sides of this equation by $(\mat \id-i\mat Z)^{-1}$ results in the expression
\begin{align}
(\opvec a- \mat K \opvec a^\dag)\ket{\psi_{\mat Z}}=\mat 0\,.
\end{align}
For $\mat K$ defined through Equation (\ref{K}), the state defined by $\mat K$ is the same as the state defined by $\mat Z$ ($\ket{\phi_{\mat K}}=\ket{\psi_{\mat Z}}$) and thus, we can write the nullifier relation for a Gaussian pure state on $n$ qumodes in terms of $\mat K$: 
\begin{align}
(\opvec a-\mat K \opvec a^\dag)\ket{\phi_{\mat K}}=\mat 0\,.
\end{align}
This is written in terms of the column vectors of annihilation and creation operators $\opvec a=(\op a_1,\dotsc,\op a_n)^\tp$ and $\opvec a^\dag=(\op a^\dag_1,\dotsc,\op a^\dag_n)^\tp$. 

Now that we have provided the reader with the relevant graphical formalism for Gaussian pure states as well as the nullifier relations in terms of the adjacency matrix $\mat K$, we proceed to define Schwinger nullifiers for Gaussian pure states, the interferometric symmetries they generate, and several applications of this idea.
\section{Schwinger nullifiers for a Gaussian pure state}\label{schingernulls}

Here we derive the main technical results of the paper: necessary and sufficient conditions for a Schwinger operator to be a nullifier for a particular Gaussian pure state, as well as the symmetry that such an operator generates. 

\begin{definition}
Let $\ket{\phi_{\mat K}}$ be a Gaussian pure state on n qumodes. A \emph{Schwinger nullifier} for $\ket{\phi_{\mat K}}$ is a real linear combination of Schwinger $\group U(2)$ operators that is a nullifier for $\ket{\phi_{\mat K}}$. 
\end{definition}
\begin{lemma}\label{lemma1}
Any real linear combination of Schwinger $\group U(2)$ operators acting on $n$ qumodes can be written concisely as
\begin{align}
\opvec a^\herm \mat M\opvec a\,,\label{linearcombo}
\end{align}
where $\mat M$ is an $n\times n$ Hermitian matrix, the column vector $\opvec a=(\op a_1,\dotsc,\op a_n)^\tp$, and the row vector $\opvec a^\herm=(\op a^\dag_1,\dotsc,\op a^\dag_n)$. Furthermore, any expression of this form corresponds to some real linear combination of Schwinger $\group U(2)$ operators acting on $n$ qumodes.
\end{lemma}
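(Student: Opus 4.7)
The plan is to prove both directions of the equivalence by direct expansion, using only the defining expressions for $\hat{S}^{x}_{i,j}, \hat{S}^{y}_{i,j}, \hat{S}^{z}_{i,j}, \hat{S}^{0}_{i,j}$ from \Eref{spinops} and \Eref{s0}.

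For the forward direction, I will start from an arbitrary real linear combination
\begin{align}
\hat{H} = \sum_{i<j}\bigl(\alpha^{x}_{ij}\hat{S}^{x}_{i,j}+\alpha^{y}_{ij}\hat{S}^{y}_{i,j}+\alpha^{z}_{ij}\hat{S}^{z}_{i,j}+\alpha^{0}_{ij}\hat{S}^{0}_{i,j}\bigr)
\nonumber
\end{align}
with $\alpha^{x}_{ij},\alpha^{y}_{ij},\alpha^{z}_{ij},\alpha^{0}_{ij}\in\mathbb{R}$, substitute the definitions from \Eref{spinops} and \Eref{s0}, and collect terms. Each generator is bilinear in creation and annihilation operators, so the result is automatically of the form $\sum_{k,l}M_{kl}\hat{a}^{\dagger}_{k}\hat{a}_{l}=\mathbf{\hat{a}}^{H}\mathbf{M}\mathbf{\hat{a}}$. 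From \Eref{spinops}, the contributions to $M_{ij}$ (for $i\neq j$) from the pair $(i,j)$ are $\tfrac{1}{2}\alpha^{x}_{ij}+\tfrac{1}{2i}\alpha^{y}_{ij}$, while the contributions to $M_{ji}$ are $\tfrac{1}{2}\alpha^{x}_{ij}-\tfrac{1}{2i}\alpha^{y}_{ij}$, which are complex conjugates. The diagonal contributions from $\hat{S}^{z}_{i,j}$ and $\hat{S}^{0}_{i,j}$ are real. Hence $\mathbf{M}=\mathbf{M}^{H}$.

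For the reverse direction, I will take an arbitrary Hermitian $\mathbf{M}$ and exhibit an explicit decomposition. Split $\mathbf{\hat{a}}^{H}\mathbf{M}\mathbf{\hat{a}}$ into diagonal and off-diagonal pieces. For each off-diagonal pair $(i,j)$ with $i<j$, write $M_{ij}=a_{ij}+ib_{ij}$ with $a_{ij},b_{ij}\in\mathbb{R}$; then Hermiticity ($M_{ji}=M^{*}_{ij}$) gives
\begin{align}
M_{ij}\hat{a}^{\dagger}_{i}\hat{a}_{j}+M_{ji}\hat{a}^{\dagger}_{j}\hat{a}_{i}=2a_{ij}\hat{S}^{x}_{i,j}-2b_{ij}\hat{S}^{y}_{i,j},\nonumber
\end{align}
which is a real linear combination of Schwinger generators on that pair. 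For each diagonal entry (which is real by Hermiticity), use the identity $\hat{a}^{\dagger}_{i}\hat{a}_{i}=\hat{S}^{0}_{i,j}+\hat{S}^{z}_{i,j}$ for any chosen partner $j\neq i$ (fixing, say, $j=i+1$ mod $n$ works for $n\geq 2$); this expresses the diagonal part as a real combination of Schwinger $U(2)$ operators as well.

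I expect the only real subtlety to be the bookkeeping of the diagonal piece: the identity $\hat{a}^{\dagger}_{i}\hat{a}_{i}=\hat{S}^{0}_{i,j}+\hat{S}^{z}_{i,j}$ shows the number operator is not a single Schwinger $U(2)$ generator but a combination of two of them on any arbitrarily chosen pair. I will note that this assignment is not unique, which is consistent with the fact that the map from coefficients $\{\alpha^{x}_{ij},\alpha^{y}_{ij},\alpha^{z}_{ij},\alpha^{0}_{ij}\}$ to $\mathbf{M}$ is overcomplete on diagonals for $n\geq 3$, but the lemma only claims that \emph{some} such decomposition exists. The case $n=1$ is trivial (no Schwinger pair operators exist, and correspondingly $\mathbf{M}$ is a real scalar handled by a number operator treated as a degenerate limit), so the interesting content is for $n\geq 2$.
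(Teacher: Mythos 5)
Your proof is correct and rests on the same underlying fact as the paper's---that the $2\times 2$ Hermitian blocks supported on qumode pairs, i.e.\ the embedded Pauli matrices, span all $n\times n$ Hermitian matrices over $\R$---but you make both directions fully explicit where the paper leans on references and an unproved basis claim. For the forward direction the paper simply identifies real combinations of Schwinger $U(2)$ operators with compact Hamiltonians and cites the standard form $\sum_{i,j}M_{ij}\hat{a}^{\dagger}_i\hat{a}_j$ with $M_{ij}=M^*_{ji}$; you instead verify Hermiticity of $\mathbf{M}$ by direct substitution of \Eref{spinops} and \Eref{s0}, which is more self-contained. For the converse the paper asserts that the matrices $\bm{\sigma}_j^{(r,s)}$ form an overcomplete basis for Hermitian matrices and writes $\mathbf{M}=\sum_{(r,s)}\sum_j c_j^{(r,s)}\bm{\sigma}_j^{(r,s)}$ without exhibiting the coefficients; your decomposition (off-diagonal entries via $2a_{ij}\hat{S}^x_{i,j}-2b_{ij}\hat{S}^y_{i,j}$, diagonal entries via $\hat{a}^{\dagger}_i\hat{a}_i=\hat{S}^0_{i,j}+\hat{S}^z_{i,j}$ with an arbitrarily chosen partner) supplies exactly the constructive content that justifies that assertion, and your remark that the choice of partner is non-unique is precisely the overcompleteness the paper mentions. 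Your handling of the degenerate $n=1$ case is a minor addition the paper does not address; both proofs implicitly assume $n\geq 2$.
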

\begin{proof}
Recall from Section \ref{sec:passiveinter} that any real linear combination of Schwinger $\group U(2)$ operators corresponds to a compact Hamiltonian. The most general compact Hamiltonian for a system of $n$ qumodes has the following form~\cite{Schumaker1986}: 
\begin{align}
\sum\limits_{i,j=1}^nM_{ij}\op a^\dag_i \op a_j \,,\label{schumaker}
\end{align}
where $M_{ij}=M^*_{ji}$.  
Using the row and column vectors of creation/annihilation operators defined above, %
we can rewrite expression (\ref{schumaker}) as 
\begin{align}
\sum\limits_{i,j=1}^nM_{ij}\op a^\dag_i \op a_j=\opvec a^\herm \mat M\opvec a\,,
\end{align}
where $\mat M$ is a Hermitian matrix with entries $M_{ij}$.

To prove the converse, consider a pair~$(r,s)$ of the $n$~qumodes. Let $\mat \sigma_j^{(r,s)}$ be the $n\times n$ Hermitian matrix whose entries are all zero except for $(r,r)$, $(r,s)$, $(s,r)$, and $(s,s)$, which contain the corresponding entries from the Pauli matrix $\mat \sigma_j$. Then, a Schwinger $\group U(2)$ operator acting on that pair can be written concisely as
\begin{align}
\op S^j_{r,s}:=\frac{1}{2}\opvec a^\herm \mat \sigma_j^{(r,s)} \opvec a\,, \qquad\qquad j\in\{0,x,y,z\}\,.
\end{align}
The collection of all $\mat \sigma_j^{(r,s)}$ for all index pairs $(r,s)$ and all $j \in \{ 0, x, y, z\}$ form an overcomplete basis for all $n \times n$ Hermitian operators. As such, we can write
\begin{align}
	\mat M = \sum_{(r,s)} \sum_j c_j^{(r,s)} \mat \sigma_j^{(r,s)}
\end{align}
for some real constants $c_j^{(r,s)}$. 
When $\mat M$ is sandwiched between the vectors of creation and annihilation operators, we have
\begin{align}
	\opvec a^\herm \mat M \opvec a = \sum_{(r,s)} \sum_j 2 c_j^{(r,s)} \op S^j_{r,s}\,,
\end{align}
which is explicitly a real linear combination of Schwinger $\group U(2)$ operators.
\end{proof}

\begin{theorem}\label{Main}
Let $\ket{\phi_{\mat K}}$ be a Gaussian pure state on $n$ qumodes. Let $\mat M$ be an $n\times n$ Hermitian matrix. Then, $\opvec a^\herm \mat M \opvec a$ is a Schwinger nullifier for $\ket{\phi_{\mat K}}$ if and only if ${\mat M \mat K=-(\mat M \mat K)^\tp}$. 
\end{theorem}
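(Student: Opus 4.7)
The plan is to apply the fundamental nullifier relation $\mathbf{\hat{a}}\ket{\phi_{\mathbf{K}}} = \mathbf{K}\mathbf{\hat{a}}^{\dagger}\ket{\phi_{\mathbf{K}}}$ directly to the candidate operator $\hat{N} = \mathbf{\hat{a}}^H \mathbf{M}\mathbf{\hat{a}} = \sum_{i,j} M_{ij}\,\hat{a}^\dagger_i \hat{a}_j$, and then exploit the commutativity of creation operators to extract a symmetrization condition on $\mathbf{MK}$. Since $\hat{a}_j$ sits immediately to the left of the state, I can substitute the nullifier relation in place to obtain
\begin{equation}
\hat{N}\ket{\phi_{\mathbf{K}}} = \mathbf{\hat{a}}^H \mathbf{M}\mathbf{K}\mathbf{\hat{a}}^{\dagger}\ket{\phi_{\mathbf{K}}} = \sum_{i,k}(\mathbf{MK})_{ik}\,\hat{a}^\dagger_i \hat{a}^\dagger_k \ket{\phi_{\mathbf{K}}}.
\end{equation}
Because $\hat{a}^\dagger_i$ and $\hat{a}^\dagger_k$ commute, the antisymmetric part of $\mathbf{MK}$ drops out, leaving
\begin{equation}
\hat{N}\ket{\phi_{\mathbf{K}}} = \tfrac{1}{2}\sum_{i,k}\bigl[\mathbf{MK} + (\mathbf{MK})^T\bigr]_{ik}\,\hat{a}^\dagger_i \hat{a}^\dagger_k \ket{\phi_{\mathbf{K}}}.
\end{equation}

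The ``if'' direction is then immediate: under the hypothesis $\mathbf{MK} = -(\mathbf{MK})^T$ the symmetric part vanishes, so $\hat{N}\ket{\phi_{\mathbf{K}}} = \mathbf{0}$. Combined with Lemma \ref{lemma1}, which guarantees that $\hat{N} = \mathbf{\hat{a}}^H\mathbf{M}\mathbf{\hat{a}}$ is a real linear combination of Schwinger $U(2)$ operators, this shows $\hat{N}$ is a Schwinger nullifier.

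For the ``only if'' direction, I must upgrade the operator equation
\begin{equation}
\sum_{ik}S_{ik}\,\hat{a}^\dagger_i\hat{a}^\dagger_k \ket{\phi_{\mathbf{K}}} = \mathbf{0}, \qquad S \coloneqq \tfrac{1}{2}\bigl[\mathbf{MK} + (\mathbf{MK})^T\bigr] = S^T,
\end{equation}
into the matrix identity $S = \mathbf{0}$. My approach is to invoke the Fock-basis representation $\ket{\phi_{\mathbf{K}}} \propto \exp\bigl(\tfrac{1}{2}\mathbf{\hat{a}}^{\dagger T}\mathbf{K}\mathbf{\hat{a}}^{\dagger}\bigr)\ket{0}$, which is well defined precisely because $\|\mathbf{K}\| < 1$, and in particular has a nonzero vacuum amplitude. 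Since the operator $\sum_{ik}S_{ik}\,\hat{a}^\dagger_i\hat{a}^\dagger_k$ raises photon number by exactly two, the two-photon Fock sector of $\hat{N}\ket{\phi_{\mathbf{K}}}$ receives contributions only from the vacuum component of $\ket{\phi_{\mathbf{K}}}$, and so equals $(\text{nonzero scalar})\cdot\sum_{ik} S_{ik}\,\hat{a}^\dagger_i\hat{a}^\dagger_k\ket{0}$. Linear independence of $\{\hat{a}^\dagger_i\hat{a}^\dagger_k\ket{0}\}_{i\le k}$ then forces $S_{ii} = 0$ and $S_{ik} + S_{ki} = 0$ for $i < k$; combined with the symmetry of $S$, this gives $S = \mathbf{0}$, equivalently $\mathbf{MK} = -(\mathbf{MK})^T$.

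The main obstacle is the ``only if'' direction: a single equation $\hat{N}\ket{\phi_{\mathbf{K}}} = 0$ might appear too weak to pin down an entire matrix identity. The key insight is that projecting onto the lowest nontrivial Fock sector (the two-photon subspace) isolates the action of $S$ on the vacuum component of $\ket{\phi_{\mathbf{K}}}$, and the nonvanishing of that component --- itself a consequence of the normalizability condition $\|\mathbf{K}\|<1$ --- is exactly what lets the matrix equation be read off directly.
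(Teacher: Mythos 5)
Your proof is correct, and its first half coincides with the paper's: you substitute the nullifier relation $\mathbf{\hat{a}}\ket{\phi_{\mathbf{K}}}=\mathbf{K}\mathbf{\hat{a}}^{\dagger}\ket{\phi_{\mathbf{K}}}$ into $\hat{N}=\mathbf{\hat{a}}^{H}\mathbf{M}\mathbf{\hat{a}}$, and use commutativity of the creation operators to discard the antisymmetric part of $\mathbf{MK}$, which immediately gives the ``if'' direction together with Lemma~\ref{lemma1}. Where you genuinely diverge is the converse. The paper performs the same symmetric/antisymmetric decomposition but then only argues that the residual operator $\mathbf{\hat{a}}^{H}[\mathbf{MK}+(\mathbf{MK})^{T}]\mathbf{\hat{a}}^{\dagger}$ is not the \emph{zero operator} (some $A_{ij}\neq 0$ implies $A_{ij}(\hat{a}^{\dagger}_{i}\hat{a}^{\dagger}_{j}+\hat{a}^{\dagger}_{j}\hat{a}^{\dagger}_{i})\neq 0$); strictly speaking this leaves a gap, since a nonzero operator could in principle still annihilate the particular state $\ket{\phi_{\mathbf{K}}}$. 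Your argument closes that gap: the residual operator raises photon number by exactly two, so the two-photon component of its action is determined solely by the vacuum amplitude of $\ket{\phi_{\mathbf{K}}}$, which is nonzero because $\ket{\phi_{\mathbf{K}}}\propto\exp\bigl(\tfrac{1}{2}\mathbf{\hat{a}}^{\dagger T}\mathbf{K}\mathbf{\hat{a}}^{\dagger}\bigr)\ket{0}$ (a representation that follows from the nullifier relation and $\mathbf{K}=\mathbf{K}^{T}$, and is well defined since $\|\mathbf{K}\|<1$); linear independence of the two-photon Fock states, together with the symmetry of $S$, then forces $S=\mathbf{0}$. The price is that you must invoke this explicit Fock-space form of the state, which the paper never writes down; the payoff is a converse that is actually airtight rather than merely an assertion that the offending operator is nonzero.
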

\begin{proof}
Starting with the known nullifier relation for a Gaussian pure state,  %
\begin{align}
(\opvec a-\mat K \opvec a^\dag)\ket{\phi_{\mat K}}=\mat 0\,,
\end{align}
we note that multiplying this expression on the left by $\opvec a^\herm \mat M$ results in the following (which is also a nullifier relation):
\begin{align}
(\opvec a^\herm \mat M \opvec a-\opvec a^\herm \mat M \mat K \opvec a^\dag)\ket{\phi_{\mat K}}=\mat 0\,.\label{twoterms}
\end{align}
By Lemma \ref{lemma1}, the first term in parentheses is a real linear combination of Schwinger $\group U(2)$ operators. %
The second term contains $\op a^\dag_i \op a^\dag_j$ and $\op a_i \op a_j$, terms which correspond to active transformations and are not linear combinations of Schwinger $\group U(2)$ operators. Thus, in order for expression (\ref{twoterms}) to be a Schwinger nullifier, the second term must vanish. 

Assume $\mat M \mat K=-(\mat M \mat K)^\tp$. We now show that the second term in (\ref{twoterms}) thus vanishes. %
We use the result that for any anti-symmetric, complex matrix $\mat A$, $\vec x^\tp \mat A \vec x=\mat 0$ for any vector $\vec x$, where $\vec x^\tp$ denotes the transpose of $\vec x$. Therefore, $\opvec a^\herm \mat M \mat K \opvec a^\dag=0$. Although we are dealing with vectors of operators in this case, we can still use this result because all the operators commute.

Conversely, assume $\mat M \mat K\neq-(\mat M \mat K)^\tp$. The second term in (\ref{twoterms}) never equals zero. To see this, 
decompose $\mat M \mat K$ into its symmetric and anti-symmetric components:
\begin{align}
\mat M \mat K=\frac{1}{2}[(\mat M \mat K+(\mat M \mat K)^\tp)+(\mat M \mat K-(\mat M \mat K)^\tp)]\,.
\end{align}
The second term in (\ref{twoterms}) can then be written as:
\begin{align}
\frac{1}{2}\opvec a^\herm[\mat M \mat K+(\mat M \mat K)^\tp]\opvec a^\dag+\frac{1}{2}\opvec a^\herm[\mat M \mat K-(\mat M \mat K)^\tp]\opvec a^\dag\,.
\end{align}
The second term in this expression vanishes (according to the same anti-symmetric argument used previously), resulting in
\begin{align}
\frac{1}{2}\opvec a^\herm[\mat M \mat K+(\mat M \mat K)^\tp]\opvec a^\dag\,.
\end{align}
This expression will never vanish. To see this, let $\mat M \mat K+(\mat M \mat K)^\tp=\mat A$ and note
\begin{align}
\opvec a^\herm \mat A\opvec a^\dag=\sum\limits_{i,j}\op a^\dag_i \op a^\dag_jA_{ij}\,.
\end{align}
There are two cases in which this expression could potentially vanish. The first is if $\mat A=\mat 0$, which is not the case as we are assuming $\mat M \mat K\neq-(\mat M \mat K)^\tp$. The second case would be if $\forall i, j$:
\begin{align}
A_{ij}\op a^\dag_i \op a^\dag_j + A_{ji}\op a^\dag_j \op a^\dag_i=0\,,
\end{align}
but since $A_{ij}=A_{ji}$, by assumption there exists some $i, j$ such that $A_{ij}\neq 0$, implying
\begin{align}
A_{ij}(\op a^\dag_i \op a^\dag_j + \op a^\dag_j \op a^\dag_i)\neq 0\,.
\end{align}
\end{proof}

\begin{theorem}\label{thm:toderiveM}
Let $\ket{\phi_{\mat K}}$ be a Gaussian pure state on n qumodes with bipartite adjacency matrix $\mat K$. Schwinger nullifiers can be derived for this state, simply using the singular value decomposition of $\mat K$ and an additional diagonal matrix.
\end{theorem}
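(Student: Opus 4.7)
The plan is to reduce everything to Theorem~\ref{Main}: we need to exhibit a nonzero Hermitian $\mathbf{M}$ such that $\mathbf{MK}$ is antisymmetric, and then $\mathbf{\hat a}^H\mathbf{M}\mathbf{\hat a}$ is automatically a Schwinger nullifier. The bipartite hypothesis on $\mathbf{K}$ means that, after reordering the qumodes along the two parts of the bipartition, we may write
\begin{align}
\mathbf{K}=\begin{pmatrix}\mathbf{0}&\mathbf{B}\\ \mathbf{B}^{T}&\mathbf{0}\end{pmatrix},
\end{align}
where $\mathbf{B}$ holds the cross-partition edge weights. The first step is therefore to take a singular value decomposition $\mathbf{B}=\mathbf{U}\mathbf{\Sigma}\mathbf{V}^{H}$ with $\mathbf{U},\mathbf{V}$ unitary and $\mathbf{\Sigma}$ a non-negative real diagonal matrix.

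Next, I would propose a block-diagonal ansatz for $\mathbf{M}$ that respects the bipartition,
\begin{align}
\mathbf{M}=\begin{pmatrix}\mathbf{U}\mathbf{D}\mathbf{U}^{H}&\mathbf{0}\\ \mathbf{0}&-\mathbf{V}^{*}\mathbf{D}\mathbf{V}^{T}\end{pmatrix},
\end{align}
where $\mathbf{D}$ is an arbitrary real diagonal matrix, playing the role of the ``additional diagonal matrix'' in the statement. Hermiticity of $\mathbf{M}$ is immediate from the unitarity of $\mathbf{U}$ and $\mathbf{V}$ together with $\mathbf{D}=\mathbf{D}^{*}=\mathbf{D}^{T}$, so the ansatz already sits in the space of operators allowed by Lemma~\ref{lemma1}.

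The verification that $\mathbf{MK}=-(\mathbf{MK})^{T}$ is then a short block computation. Using $\mathbf{U}^{H}\mathbf{B}=\mathbf{\Sigma}\mathbf{V}^{H}$ and $\mathbf{V}^{T}\mathbf{B}^{T}=\mathbf{\Sigma}\mathbf{U}^{T}$, the off-diagonal blocks of $\mathbf{MK}$ reduce to $\mathbf{U}\mathbf{D}\mathbf{\Sigma}\mathbf{V}^{H}$ and $-\mathbf{V}^{*}\mathbf{D}\mathbf{\Sigma}\mathbf{U}^{T}$. Transposing the whole matrix exchanges these two blocks, and because $\mathbf{D}$ and $\mathbf{\Sigma}$ commute as diagonal matrices, the signs in the two blocks line up exactly so as to give $\mathbf{MK}=-(\mathbf{MK})^{T}$. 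Theorem~\ref{Main} then certifies that $\mathbf{\hat a}^{H}\mathbf{M}\mathbf{\hat a}$ is a Schwinger nullifier for $\ket{\phi_{\mathbf{K}}}$, and since $\mathbf{D}$ was an arbitrary real diagonal matrix, this in fact yields a whole family of such nullifiers.

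The main obstacle I anticipate is conceptual rather than computational: one has to see that the antisymmetry condition $\mathbf{MK}=-(\mathbf{MK})^{T}$, applied to a $\mathbf{K}$ whose only nonzero blocks are off-diagonal, forces the two diagonal blocks of $\mathbf{M}$ to be paired through the left and right singular spaces of $\mathbf{B}$, with opposite-signed real spectra along the singular directions. Once this structural observation is in hand, the SVD of $\mathbf{B}$ (equivalently, of $\mathbf{K}$ in its bipartite block form) is precisely the basis in which the pairing becomes diagonal, and the free diagonal $\mathbf{D}$ parameterises the remaining nullifier freedom.
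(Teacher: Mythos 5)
Your proof is correct and takes essentially the same route as the paper's: the paper's choice $\mathbf{M}=(\mathbf{U}\oplus\mathbf{V}^*)(\bm{\sigma}_z\otimes\mathbf{D})(\mathbf{U}^{H}\oplus\mathbf{V}^T)$ is exactly your block-diagonal ansatz, and the verification of $\mathbf{MK}=-(\mathbf{MK})^{T}$ rests on the same commutation of $\mathbf{D}$ with $\mathbf{\Sigma}$. The only (immaterial) difference is that the paper permits $\mathbf{D}$ to be any symmetric matrix commuting with $\mathbf{\Sigma}$, whereas you restrict to real diagonal $\mathbf{D}$, which is a special case that already suffices.
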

\begin{proof}
Starting with bipartite $\mat K$:
\begin{align}
\mat K=\begin{pmatrix}
\mat 0 & \mat K_0\\
\mat K^\tp_0 & \mat 0
\end{pmatrix}\,,
\end{align}
we decompose $\mat K_0$ into its singular value decomposition:
\begin{align}
\mat K&=\begin{pmatrix}
\mat 0 & \mat U \mat \Sigma \mat V^\herm \\
\mat V^* \mat \Sigma \mat U^\tp & \mat 0
\end{pmatrix}\,,
\end{align}
where $\mat U$ and $\mat V$  are unitary $n\times n$ matrices, $\mat \Sigma$ is an $n\times n$ diagonal matrix with non-negative values, and $\mat V^\herm$ denotes the conjugate transpose of $\mat V$.\footnote{We reserve $^\dag$ for the Hermetian conjugate of an operator. See Ref.~\cite{Menicucci2011} for more details on this notation.}  
We can then write $\mat K$ as:
\begin{align}
\mat K=(\mat U\oplus\mat V^*)(\mat \sigma_x \otimes\mat \Sigma)(\mat U^\tp \oplus\mat V^\herm)\,,
\end{align}
where $\mat \sigma_x$ is the Pauli-$x$ matrix. 
If we define $\mat M$ to be of the form
\begin{align}
\mat M=(\mat U\oplus\mat V^*)(\mat \sigma_z \otimes\mat D)(\mat U^\herm \oplus\mat V^\tp)\,,
\end{align}
where $\mat D$ is any symmetric matrix that commutes with $\mat \Sigma$ and $\mat \sigma_z$ is the Pauli-$z$ matrix, the equation $\mat M \mat K=-(\mat M \mat K)^\tp$ holds. By Theorem \ref{Main}, %
this means that $\opvec a^\herm \mat M \opvec a$ is a Schwinger nullifier for $\ket{\phi_{\mat K}}$. %
\end{proof}

\begin{theorem}\label{nullsimplysymmetries}
Let $\ket{\phi_{\mat K}}$ be a Gaussian pure state on $n$ qumodes. Schwinger nullifiers for this state generate passive interferometric symmetries that apply to the state---i.e., passive interferometric operations that leave the state invariant. 
\end{theorem}

\begin{proof}
Let $\op N$ be a Schwinger nullifier for $\ket{\phi_{\mat K}}$. Therefore, it is a real linear combination of Schwinger $\group U(2)$ operators. Consider the unitary operator $\op U(\theta)=e^{-i\theta\op N}$, where $\theta\in\reals$. This unitary is generated by Schwinger $\group U(2)$ elements and therefore is a passive interferometric operation on two qumodes. As $\op N$ is a nullifier for $\ket{\phi_{\mat K}}$, $\op U(\theta)$ is a stabilizer for $\ket{\phi_{\mat K}}$ (i.e., $\op U(\theta)\ket{\phi_{\mat K}}=\ket{\phi_{\mat K}}$) and is therefore an operator that leaves the state invariant.
\end{proof}   

\section{Schwinger nullifiers for H-graph states}\label{sec:h-graphstates}
There exists a particular class of Gaussian pure states whose adjacency matrix $\mat K$ depends on a real, symmetric matrix $\mat G$:
\begin{align}
\mat K=\tanh(\alpha\mat G)\,,\label{eq:kintermsofG}
\end{align}
where $\alpha>0$ is an overall squeezing parameter. These states have been previously discussed in terms of the $\mat Z$ matrix formalism \cite{Menicucci2011}, where $\mat Z=ie^{-2\alpha\mat G}$. For proof of Equation (\ref{eq:kintermsofG}) in light of this formalism, see Appendix \ref{A1}. Such states can be generated via parametric down conversion in an optical parametric oscillator \cite{Pfister2004}, whereby the multi-qumode squeezing interaction is defined by the Hamiltonian
\begin{align}
\mathcal{H}=i\hbar\kappa\sum\limits_{m,n}G_{mn}(\op a^\dag_m \op a^\dag_n - \op a_m \op a_n)\,,
\end{align}
where $\kappa$ is a global coupling strength, and $\mat G$ is a matrix that specifies the multi-qumode interactions. 
Such states are called \textit{H-graph states} \cite{Flammia2009,Zaidi2008}, as they are defined in terms of a matrix (or \textit{graph}) $\mat G$ that determines the multimode-squeezing \textit{Hamiltonian}. H-graph states can also be generated via different optical methods. For instance, they can be generated by sending pairs of TMS states through a series of beamsplitters \cite{Menicucci2011a}.

\subsection{Schwinger nullifiers for H-graph states with bipartite, self-inverse adjacency matrices}\label{sec:methodtoderive}
 The class of H-graph states with self-inverse and bipartite $\mat G$ matrices are particularly useful for measurement-based quantum computation since it has been shown that these states are equivalent to CV cluster states up to certain phase shifts \cite{Flammia2009} and are able to generated in a highly scalable and compact fashion \cite{Pfister2004,Zaidi2008,Menicucci2007,Menicucci2011a,Wang:2014im}. H-graph states of this form that have a square-lattice structure are universal resources for measurement-based quantum computation \cite{Menicucci2011a,Wang:2014im,Flammia2009}.%
When $\mat G$ is self-inverse, $\mat K$ can be written as 
\begin{align}
\mat K=(\tanh \alpha)\mat G\,.
\end{align}
For proof of this, see Appendix \ref{A1}. 
When $\mat G$ is also bipartite,
\begin{align}
\mat K=(\tanh \alpha)\begin{pmatrix}
0 & \mat G_0\\
\mat G^\tp_0 & 0
\end{pmatrix}\,,\end{align}
which allows a straightforward derivation of $\mat M$ (as defined in Theorem \ref{thm:toderiveM}) as
\begin{align}
\mat M=(\mat U\oplus\mat V^*)(\mat \sigma_z \otimes\mat D)(\mat U^\herm \oplus\mat V^\tp)\,,
\end{align}
where $\mat U \mat \Sigma \mat V^\herm = \mat G_0$ is the singular value decomposition of $\mat G_0$, and $\mat D$ is any symmetric matrix that commutes with $\mat \Sigma$. 

This method can be used to derive local Schwinger nullifiers for any H-graph state with bipartite, self-inverse $\mat G$. The simplest example of such a graph is the TMS state shown in Fig.~\ref{fig:TMS} and for which $\mat G=\mat \sigma_x$ and therefore, $\mat M = \mat \sigma_z$ by the above construction. This means that $\op S_z$ nullifies a TMS state, as is well known~\cite{Korolkova}. A more substantial example is the \textit{dual-rail quantum wire}, which is a useful resource for CV measurement-based quantum computation~\cite{Menicucci2011a,Alexander:2014ew}. This is discussed next.

\section{The dual-rail quantum wire}\label{sec:qw}
As an example, we consider the dual-rail quantum wire (dual rail), which is a Gaussian cluster state with one-dimensional topology. Such a state can be generated by sending a collection of TMS states through beamsplitter operations \cite{Menicucci2011a}. %
Suitably weaving multiple dual rails together results in a structure that is equivalent to a 2D cluster state and is thus a universal resource for universal measurement-based quantum computation \cite{Flammia2009} (with fault tolerance possible if the squeezing is above a finite threshold~\cite{Menicucci:2014cx}).

We depict a dual rail on periodic boundaries in Figure \ref{fig:quantumwire}, which shows the precise graphical representation (discussed in Section \ref{introducegraphicalcalculus}) for the state, as well as additional labels that clarify how we re-interpret this state as entangled Schwinger spins. We interpret a dual rail on $2n$ qumodes as a collection of $n$ entangled Schwinger spins (labelled along the horizontal axis in Figure \ref{fig:quantumwire}), with qumodes paired vertically (labelled $a$ and $b$ along the vertical axis). A single qumode can thus be labelled according to which spin it is part of as well as where it lies on the vertical axis. For example, the first Schwinger spin in Figure \ref{fig:quantumwire} contains qumodes $1a$ and $1b$.

The method outlined in Section \ref{sec:methodtoderive} can be applied to derive local Schwinger nullifiers for the dual rail. In this section, we introduce another method that can be applied to such states, which is more intuitive. This method involves reinterpreting the dual rail (or any H-graph state with self-inverse and bipartite $\mat G$) in a different qumode decomposition.
\begin{figure}
\beginpgfgraphicnamed{graphics/cvwire}%
\begin{tikzpicture} [baseline=-2.5mm-0.5ex, x=1cm, y=5mm, inner sep=0pt, outer sep=0pt,scale=1.25,every node/.style={transform shape}]
	\def\lastn{5}
	\begin{scope}
	\clip ($ (.4,-1) - 0.55*(0,\micronodesize) $) rectangle ($ (\lastn+.6,0) + 0.55*(0,\micronodesize) $);
	\foreach \n in {0,...,\lastn}
	{
		\ifthenelse {\not\isodd{\n}}
			{
			\node (ia) [micro-even] at (\n,0) {};
			\node (ib) [micro-even] at (\n,-1) {};
			}
			{
			\node (ia) [micro-even] at (\n,0) {};
			\node (ib) [micro-even] at (\n,-1) {};
			}
		\node (oa) at (\n+1,0) {};
		\node (ob) at (\n+1,-1) {};
		\beamsplit {ia} {ib} {oa} {ob}
	}
		\ifthenelse {\isodd{\lastn}}
			{
			\node (ia) [micro-even] at (\lastn+1,0) {};
			\node (ib) [micro-even] at (\lastn+1,-1) {};
			}
			{
			\node (ia) [micro-even] at (\lastn+1,0) {};
			\node (ib) [micro-even] at (\lastn+1,-1) {};
			}
	\end{scope}
				
			\node at (-0.2,0.1) [anchor=north]{$a$};
			\node at (-0.2,-0.7) [anchor=north]{$b$};
			\node at (1,-2) [anchor=north]{$1$};
			\node at (2,-2) [anchor=north]{$2$};
			\node at (3,-2) [anchor=north]{$3$};
			\node at (4,-2) [anchor=north]{$4$};
			\node at (5,-2) [anchor=north]{$5$};
	\endpgfgraphicnamed
\end{tikzpicture}
\caption{\label{fig:quantumwire}The graph of the adjacency matrix ($\mat K$) for the dual-rail quantum wire on periodic boundary conditions. The weights of the links are $\pm\frac{1}{2}\tanh \alpha$, differentiated by the blue and yellow links, respectively. Schwinger spins are formed by pairing qumodes together vertically, where each Schwinger spin is labelled by integer values $1$--$5$ and contains two qumodes, labelled by $a$ and $b$.}
\end{figure}
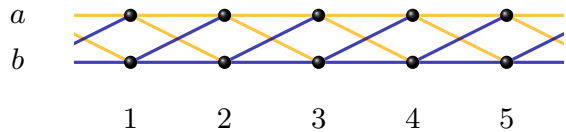

\subsection{Physical and distributed qumodes}\label{modechangetool}

It is possible to represent the dual rail in a different conceptual way, which is illustrated in Figure \ref{fig:distributedmodes}. This is an equivalent description of the state, whereby the labelling of the qumodes has changed from $(a,b)$ to~$(\pm)$. This represents a change of tensor-product structure---i.e.,~a different qumode decomposition of the two-qumode Hilbert space. Specifically, we have defined 
\begin{align}
\op a_{n\pm}=\frac{1}{\sqrt{2}}(\op a_{na}\pm\op a_{nb})\,.\label{modechange}
\end{align}
In effect, this change of tensor-product structure allows the beamsplitter interaction to be absorbed into the qumode redefinition and the system to be interpreted as a collection of TMS states in the $\pm$ qumodes even after the beamsplitter operation has been done \cite{Yokoyama2013,Alexander:2014ew}. We will call the $\pm$~qumodes the \textit{distributed qumodes} and the $(a,b)$~qumodes the \textit{physical qumodes} from now on. 
\begin{figure}
\begin{center}
\beginpgfgraphicnamed{graphics/tms}
\begin{tikzpicture} [baseline=-2.5mm-0.5ex, x=1cm, y=5mm, inner sep=0pt, outer sep=0pt,scale=1.25,every node/.style={transform shape}]
	\def\lastn{5}
	\begin{scope}
	\clip ($ (.4,-1) - 0.55*(0,\micronodesize) $) rectangle ($ (\lastn+.6,0) + 0.55*(0,\micronodesize) $);
	\foreach \n in {0,...,\lastn}
	{
		\ifthenelse {\not\isodd{\n}}
			{
			\node (ia) [micro-even] at (\n,0) {};
			\node (ib) [micro-even] at (\n,-1) {};
			}
			{
			\node (ia) [micro-even] at (\n,0) {};
			\node (ib) [micro-even] at (\n,-1) {};
			}
		\node (oa) at (\n+1,0) {};
		\node (ob) at (\n+1,-1) {};
		\tmslinky {ia} {ib} {oa} {ob};
	}
		\end{scope}
		\node at (-0.2,0.1) [anchor=north]{$+$};
		\node at (-0.2,-0.7) [anchor=north]{$-$};
			\node at (1,-2) [anchor=north]{$1$};
					\node at (2,-2) [anchor=north]{$2$};
					\node at (3,-2) [anchor=north]{$3$};
					\node at (4,-2) [anchor=north]{$4$};
					\node at (5,-2) [anchor=north]{$5$};

	\endpgfgraphicnamed
\end{tikzpicture}
\caption{\label{fig:distributedmodes}The graph of the adjacency matrix for the dual-rail quantum wire in the distributed qumodes, with the qumode-operator relabelling (in terms of $\pm$ along the vertical axis) as indicated in Equation (\ref{modechange}). The weights of the links are all $\tanh \alpha$ in this qumode decomposition.}
\end{center}
\end{figure}
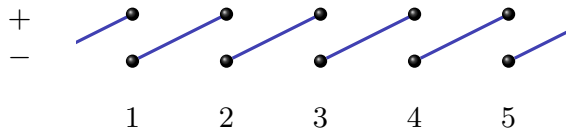

\subsection{Local nullifiers for the dual-rail quantum wire}
We now show how to derive local Schwinger nullifiers for the dual rail using its representation in distributed qumodes (Fig.~\ref{fig:distributedmodes}). %
The dual-rail wire in these modes is just a collection of TMS states in these modes. As discussed in Sec.~\ref{sec:methodtoderive}, a TMS state admits a nullifier of $\op S^z$ acting on the corresponding two (distributed) qumodes~\cite{Korolkova}. As such, each of these operators is also a nullifier for the full dual-rail wire. Expressing this in terms of the physical qumodes [using Eq.~\eqref{modechange}], this is equivalent to a sum of two Schwinger spin operators acting on neighbouring physical-qumode spins.

Concretely, one nullifier for the dual rail (see Figure \ref{fig:distributedmodes}) is $\op S^z_{2-,3+}$, which can be rewritten in the physical modes as 
\begin{align}
\op S^z_{2-,3+}&=\frac{1}{2}(\op a^\dag_{2-}\op a_{2-}-\op a^\dag_{3+}\op a_{3+})\nonumber\\
&=\frac{1}{2}(\op S^0_{2a,2b}-\op S^x_{2a,2b}-\op S^0_{3a,3b}-\op S^x_{3a,3b})\,.\label{end}
\end{align}

For a dual rail (on periodic boundaries) on $n$ Schwinger spins, this technique applied to every TMS state in the system results in $n$ local, independent Schwinger nullifiers:
\begin{align}
\op N_i= \op S^0_{ia,ib}-\op S^x_{ia,ib}-\op S^0_{(i+1)a,(i+1)b}-\op S^x_{(i+1)a,(i+1)b}\,,\label{localnuls}  
\end{align}
for all $i\in\Z_n$, and where the qumodes are labelled $a$ and $b$ and paired vertically into Schwinger spins that are labelled by $i$. Note that addition is modulo $n$.

The technique described in Section \ref{sec:methodtoderive}, which derives Schwinger nullifiers for any H-graph state with bipartite, self-inverse $\mat G$ could also have been used to derive the nullifiers in this section. However, the technique of switching qumode decomposition lends itself to a more natural and straightforward derivation and also provides us with a second method to derive such nullifiers in this case.

The nullifiers of Eq.~\eqref{localnuls} reveal information about the underlying Schwinger spin structure of the dual rail on periodic boundaries. For instance, they reveal that for any pair of adjacent Schwinger spins $i$ and $j$, the values obtained by measuring $\op S^0_{ia,ib}-\op S^0_{ja,jb}$ determine the total spin along the $x$-axis. The $\op S^0_{ia,ib}$ operator corresponds to the total photon number in the Schwinger spin $i$, and thus, the spin along the $x$-axis at any two adjacent spins on the ring of entangled Schwinger spins is directly determined by the photon-number difference between the spins.

It is worthwhile to note that these nullifiers do not solely generate rotations of the corresponding Schwinger spins, as these nullifiers include the $\op S^0$ operator, which is not a generator of the Schwinger representation of $\group {SU}(2)$. It is however, a generator of the Schwinger representation of $\group U(2)$ (as discussed in Section \ref{Schwingerintro}) and  corresponds to an equal phase shift being applied to both qumodes it acts on.

\subsection{Chain-like Schwinger spin nullifiers for the dual-rail quantum wire}
Any linear combination of the nullifiers given in Equation (\ref{localnuls}) is also a Schwinger nullifier. In this section, we show that by taking positive combinations of some of these local nullifiers, chain-like nullifiers emerge. These provide information about a segment of the ring of entangled Schwinger spins on the dual rail, where the length of the segment depends on the amount of nullifiers summed. We illustrate this by considering a dual-rail quantum wire on eight Schwinger spins with periodic boundaries, which has corresponding nullifiers given by Equation (\ref{localnuls}) with $n=8$. Summing nullifiers $\op N_1$ to $\op N_{3}$ results in a nullifier of the form:
\begin{align}\label{segment}
\op S^0_{1a,1b}-\op S^x_{1a,1b}-2\op S^x_{2a,2b}-2\op S^x_{3a,3b}-
\op S^0_{4a,4b}-\op S^x_{4a,4b}\,,
\end{align}
which we illustrate graphically in Figure \ref{fig:segment}. We call these chain-like nullifiers as it can be shown that by adding more nullifiers to the summed expression, the nullifier grows along the wire until all of the $n$ independent nullifiers have been summed, resulting in a global nullifier that is described in the next section.

These chain-like nullifiers reveal information about any segment of the dual rail. The nullifier that is illustrated in Figure \ref{fig:segment} reveals that the total spin along the $x$-axis for Schwinger spins $2$ and $3$ depends on the values obtained by measuring $\op S^0_{1a,1b}-\op S^x_{1a,1b}$ and $\op S^0_{4a,4b}-\op S^x_{4a,4b}$, which can be done with beamsplitters and photon counting. 

\begin{figure}
\begin{tikzpicture} [baseline=-2.5mm-0.5ex, x=1cm, y=5mm, inner sep=0pt, outer sep=0pt,scale=1.25,every node/.style={transform shape}]
	\def\lastn{5}
	\begin{scope}
	\clip ($ (.4,-1) - 0.55*(0,\micronodesize) $) rectangle ($ (\lastn+.6,0) + 0.55*(0,\micronodesize) $);
	\foreach \n in {0,...,\lastn}
	{
		\ifthenelse {\not\isodd{\n}}
			{
			\node (ia) [micro-even] at (\n,0) {};
			\node (ib) [micro-even] at (\n,-1) {};
			}
			{
			\node (ia) [micro-even] at (\n,0) {};
			\node (ib) [micro-even] at (\n,-1) {};
			}
		\node (oa) at (\n+1,0) {};
		\node (ob) at (\n+1,-1) {};
		\beamsplit {ia} {ib} {oa} {ob}
	}
		\ifthenelse {\isodd{\lastn}}
			{
			\node (ia) [micro-even] at (\lastn+1,0) {};
			\node (ib) [micro-even] at (\lastn+1,-1) {};
			}
			{
			\node (ia) [micro-even] at (\lastn+1,0) {};
			\node (ib) [micro-even] at (\lastn+1,-1) {};
			}
				\end{scope}
	\draw[left color=green, opacity=0.5](1,-0.5) ellipse (0.2 and 1.2);		
    \draw[left color=red, opacity=0.5](4,-0.5) ellipse (0.2 and 1.2);
     \draw[left color=blue, opacity=0.5](2,-0.5) ellipse (0.2 and 1.2);
       \draw[left color=blue, opacity=0.5](3,-0.5) ellipse (0.2 and 1.2);
	\node at (1,-2) [anchor=north]{$1$};
						\node at (-0.2,0.1) [anchor=north]{$a$};
							\node at (-0.2,-0.7) [anchor=north]{$b$};
	\node at (2,-2) [anchor=north]{$2$};
	\node at (3,-2) [anchor=north]{$3$};
	\node at (4,-2) [anchor=north]{$4$};
	\node at (5,-2) [anchor=north]{$5$};
\end{tikzpicture}
\caption{\label{fig:segment}A portion of the dual-rail quantum wire on periodic boundaries interpreted as a ring of Schwinger spins. The qumodes are paired vertically into Schwinger spins, labelled $1$--$5$. The coloured ellipses are an illustration of the chain-like nullifier that results from taking linear positive combinations of three of the $n$ independent nullifiers from Equation (\ref{localnuls}) (where the green and red ellipses correspond to $\op S^0_{1a,1b}-\op S^x_{1a,1b}$ and $-\op S^0_{4a,4b}-\op S^x_{4a,4b}$ respectively) and the $2\op S^x$ operator acts on spins $2$ and $3$, indicated by the blue ellipses.}
\end{figure}
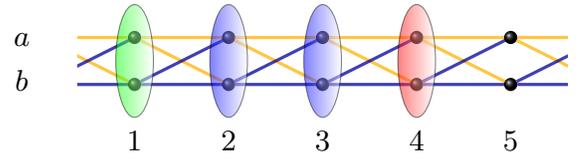
\subsection{Global Schwinger spin nullifiers for the dual-rail quantum wire}\label{sec:glob}
We just discussed how, by taking linear combinations of some of the local nullifiers presented in Equation (\ref{localnuls}), we can derive new nullifiers that act on segments of the dual rail. As we will show in this section, taking linear combinations of \textit{all} the independent nullifiers for a given system results in global nullifiers---i.e., nullifiers that act on every Schwinger spin in the system simultaneously. In terms of systems of entangled Schwinger spins, these global nullifiers reveal global properties of the system, such as what the overall spin is along a certain direction. As will become clear in this section, the differing global nullifiers also illustrate different spin structures---i.e., different potential qumode pairings.

For the dual rail on periodic boundaries comprising a total of $n$ Schwinger spins, by taking the sum of all nullifiers from the set given in Equation (\ref{localnuls}) we arrive at a global nullifier:
\begin{align}
\sum\limits_{i=1}^n\op S^x_{ia,ib}\,,\label{globx}
\end{align}
where $i$ denotes the Schwinger spin labelled $i$ along the horizontal direction and the $a$ and $b$ labels indicate the vertically paired qumodes. This is illustrated graphically in Figure \ref{fig:sxglobal}, which indicates the $\op S^x$ operator with blue ellipses. From this nullifier, we can conclude that the ring of entangled Schwinger spins (that is of variable length) always has a total spin along the $x$-axis equal to zero.

If $n$ is even, taking a sum of the differences of pairs of nullifiers from Equation (\ref{localnuls}) results in another global nullifier that acts on a different spin structure than the one that has been discussed so far. This new spin structure consists of qumodes that are paired \textit{horizontally} into spins. The following nullifier applies to the structure: 
\begin{align}
\sum_{i=1}^{n/2} \sum_{j\in\{a,b\}} \op S^z_{(2i-1)j,(2i)j}\,,\label{globz}
\end{align}
where $i$ denotes the Schwinger spin labelled $i$, which consists of horizontally paired qumodes. This is illustrated in Figure \ref{fig:szglobal}, where the purple ellipses correspond to the $\op S^z$ nullifiers acting the corresponding qumodes. Thus, a dual rail with an even number of Schwinger spins (made out of horizontally paired qumodes rather than vertically) is a ring of entangled Schwinger spins in which the total spin along the $z$-axis is equal to zero at all times.

\begin{figure}
\beginpgfgraphicnamed{graphics/cvwire}%
\begin{tikzpicture} [baseline=-2.5mm-0.5ex, x=1cm, y=5mm, inner sep=0pt, outer sep=0pt,scale=1.25,every node/.style={transform shape}]
	\def\lastn{4}
	\begin{scope}
		\clip ($ (.4,-1) - 0.55*(0,\micronodesize) $) rectangle ($ (\lastn+.6,0) + 0.55*(0,\micronodesize) $);
	\foreach \n in {0,...,\lastn}
	{
		\ifthenelse {\not\isodd{\n}}
			{
			\node (ia) [micro-even] at (\n,0) {};
			\node (ib) [micro-even] at (\n,-1) {};
			}
			{
			\node (ia) [micro-even] at (\n,0) {};
			\node (ib) [micro-even] at (\n,-1) {};
			}
		\node (oa) at (\n+1,0) {};
		\node (ob) at (\n+1,-1) {};
		\beamsplit {ia} {ib} {oa} {ob}
	}
		\ifthenelse {\isodd{\lastn}}
			{
			\node (ia) [micro-even] at (\lastn+1,0) {};
			\node (ib) [micro-even] at (\lastn+1,-1) {};
			}
			{
			\node (ia) [micro-even] at (\lastn+1,0) {};
			\node (ib) [micro-even] at (\lastn+1,-1) {};
			}
	\end{scope}
			\node at (1,-2) [anchor=north]{$1$};
			\node at (2,-2) [anchor=north]{$2$};
			\node at (3,-2) [anchor=north]{$3$};
			\node at (4,-2) [anchor=north]{$4$};
						\node at (-0.2,0.1) [anchor=north]{$a$};
						\node at (-0.2,-0.7) [anchor=north]{$b$};
		\draw[left color=blue, opacity=0.5](1,-0.5) ellipse (0.2 and 1.2);	
			\draw[left color=blue, opacity=0.5](2,-0.5) ellipse (0.2 and 1.2);	
				\draw[left color=blue, opacity=0.5](3,-0.5) ellipse (0.2 and 1.2);	
	    \draw[left color=blue, opacity=0.5](4,-0.5) ellipse (0.2 and 1.2);
	\endpgfgraphicnamed
\end{tikzpicture}
\caption{\label{fig:sxglobal}Schwinger spin structure for the dual-rail quantum wire on periodic boundary conditions, where the qumodes (labelled by $a$ and $b$) are paired together vertically into Schwinger spins, labelled 1--4. The blue ellipses are an illustration of the global nullifier for this state---i.e., a $\op S^x$ operator acting on every Schwinger spin of the system. }
\end{figure}
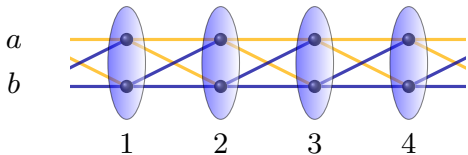
\section{Noise in CV measurement-based quantum computation}

As an application of these results, we show that the interferometric symmetries of the dual-rail quantum wire derived above can be used to deduce noise properties of measurement-based quantum computation using this resource state~\cite{Alexander:2014ew}.

We know from Theorem~\ref{nullsimplysymmetries} that Schwinger nullifiers generate passive interferometric symmetries. The local nullifier of Equation (\ref{end}), which acts on the dual rail on periodic boundary conditions $\ket{\psi}$, corresponds to the following local interferometric symmetry:
\begin{align}
e^{-i\theta(\op S^0_{2a,2b}-\op S^x_{2a,2b})}\ket{\psi}=e^{-i\theta(\op S^0_{3a,3b}+\op S^x_{3a,3b})}\ket{\psi}\,,\label{localsym}
\end{align} for all $\theta\in\reals$. In this section, we show how this symmetry explains a key result from recent work \cite{Alexander:2014ew}.

Measurement-based quantum computation on the dual rail in the physical qumodes is equivalent to sequential gate teleportation in the distributed qumodes ~\cite{Yokoyama2013}. As TMS states are not maximally entangled, any CV gate teleportation scheme never achieves perfect teleportation. Rather, there is always noise that distorts the information passing through the teleportation channel in a way that depends specifically on which gate is being teleported.

Interestingly, it was recently found \cite{Alexander:2014ew} that performing a rotation gate on the dual rail introduces the same amount of noise as performing the identity gate. %
This fact can be explained by the local symmetry of (\ref{localsym}), as we will now show. 

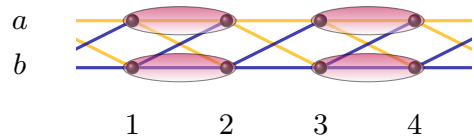
\begin{figure}
\beginpgfgraphicnamed{graphics/cvwire}%
\begin{tikzpicture} [baseline=-2.5mm-0.5ex, x=1cm, y=5mm, inner sep=0pt, outer sep=0pt,scale=1.25,every node/.style={transform shape}]
	\def\lastn{4}
	\begin{scope}
	\clip ($ (.4,-1) - 0.55*(0,\micronodesize) $) rectangle ($ (\lastn+.6,0) + 0.55*(0,\micronodesize) $);
	\foreach \n in {0,...,\lastn}
	{
		\ifthenelse {\not\isodd{\n}}
			{
			\node (ia) [micro-even] at (\n,0) {};
			\node (ib) [micro-even] at (\n,-1) {};
			}
			{
			\node (ia) [micro-even] at (\n,0) {};
			\node (ib) [micro-even] at (\n,-1) {};
			}
		\node (oa) at (\n+1,0) {};
		\node (ob) at (\n+1,-1) {};
		\beamsplit {ia} {ib} {oa} {ob}
	}
		\ifthenelse {\isodd{\lastn}}
			{
			\node (ia) [micro-even] at (\lastn+1,0) {};
			\node (ib) [micro-even] at (\lastn+1,-1) {};
			}
			{
			\node (ia) [micro-even] at (\lastn+1,0) {};
			\node (ib) [micro-even] at (\lastn+1,-1) {};
			}
	\end{scope}
	\draw[top color=purple, opacity=0.5](1.5,0) ellipse (0.6 and 0.3);		
	\draw[top color=purple, opacity=0.5](1.5,-1) ellipse (0.6 and 0.3);	
	\draw[top color=purple, opacity=0.5](3.5,0) ellipse (0.6 and 0.3);		
	\draw[top color=purple, opacity=0.5](3.5,-1) ellipse (0.6 and 0.3);	
		\node at (1,-2) [anchor=north]{$1$};
				\node at (2,-2) [anchor=north]{$2$};
				\node at (3,-2) [anchor=north]{$3$};
				\node at (4,-2) [anchor=north]{$4$};
							\node at (-0.2,0.1) [anchor=north]{$a$};
							\node at (-0.2,-0.7) [anchor=north]{$b$};
	\endpgfgraphicnamed
\end{tikzpicture}
\caption{\label{fig:szglobal}Alternate Schwinger spin structure for the dual-rail quantum wire on periodic boundary conditions (where the number of Schwinger spins is even). Note that the qumodes are now paired horizontally into spins. A global nullifier for this state is illustrated by the purple ellipses, which denote the $\op S^z$ operator acting on all of the Schwinger spins in the system.}
\end{figure}
\subsection{Two pictures of measurement-based quantum computing}

To discuss measurement-based quantum computing on the dual rail, we will vertically pair qumodes (which we label $a$ and $b$) into macronodes, which coincide with the spins shown in Fig.~\ref{fig:sxglobal}. 
Bare teleportation by one macronode on the dual-rail wire (with no ensuing phase shift) requires simply measuring $\op q_a$ and $\op p_b$~\cite{Yokoyama2013}. Preceding these same measurements by an equal phase shift of $\phi$ on each qumode still results in the information being teleported but now with a phase shift of~$2\phi$ subsequently being applied to the encoded information. This property was calculated explicitly in Refs.~\cite{Yokoyama2013,Alexander:2014ew}.

In fact, the symmetry of Eq.~\eqref{localsym} explains this fact intuitively. This symmetry allows us to replace the pre-measurement phase shifts on the measured qumodes with equivalent post-measurement phase shifts on the target qumodes, which will then be applied to the encoded quantum information \emph{after} it has been teleported. Both pictures are equivalent, but the symmetry-based explanation is more elegant and more powerful, and it can be generalized easily to other symmetries of other states.

The connection is not immediate, however, so we will show it explicitly in two parts. First, we will show~(a) that measuring $\op q_a$ and $\op p_b$ on the state represented by the left-hand side of Eq.~\eqref{localsym} can be effected by taking linear combinations of the outcomes of measurements of the rotated quadratures~$\op q_a^\theta$ and~$\op p_b^\theta$. These are defined by
\begin{align}
\begin{pmatrix}
	\op q^\theta	\\
	\op p^\theta
\end{pmatrix}
\coloneqq
\begin{pmatrix}
	\cos \theta	&	\sin \theta	\\
	-\sin \theta	&	\cos \theta
\end{pmatrix}
\begin{pmatrix}
	\op q	\\
	\op p
\end{pmatrix}
\,,
\end{align} 
which corresponds to the unitary Heisenberg action
\begin{align}
	\opvec x^\theta = \op R(\theta)^\dag \opvec x \op R(\theta)\,,
\end{align}
where $\op R(\theta)=\exp(-i\theta\op a^\dag \op a)$ is called a (positive) phase shift by~$\theta$. Second, we will show~(b) that the operation on the right-hand side of Eq.~\eqref{localsym} is is equivalent to a local phase shift on the teleported information.

The symplectic matrix corresponding to the Heisenberg action of $e^{-i2 \phi (\op S^0_{2a,2b}-\op S^x_{2a,2b})}$ is given by
\begin{align}
\mat S &=
\begin{pmatrix}
 c^2 & s^2 & c  s  & -c  s  \\
 s^2 & c^2 & -c  s  & c  s  \\
 -c  s  & c  s  & c^2 & s^2 \\
 c  s  & -c  s  & s^2 & c^2
\end{pmatrix}
\,,
\end{align}
where
$c = \cos \phi$ and $s = \sin \phi$. This means that multiplying the vector of quadrature operators
\begin{align}
\op{\vec x}=\begin{pmatrix}
\op q_{2a}\\
\op q_{2b}\\
\op p_{2a}\\
\op p_{2b}
\end{pmatrix} 
\end{align} by $\mat S$ gives the Heisenberg evolution of each operator and thus determines the rotated quadratures that are measured. Under this evolution ($\opvec x \mapsto \mat S \opvec x$),
\begin{align}
\op q_{2a} &\mapsto \op q_{2a}^\phi \cos  \phi - \op p_{2b}^\phi \sin \phi \nonumber\,,\\ %
\op p_{2b} &\mapsto \op q_{2a}^\phi \sin  \phi + \op p_{2b}^\phi \cos \phi \label{measu}\,.
\end{align}
Notice that these operators can be measured simply by measuring the (local) rotated quadratures~$\op q_{2a}^\theta$ and~$\op p_{2b}^\theta$ and then taking appropriate linear combinations of the results. This verifies~(a) above.

Next, we use Eq.~\eqref{modechange} to evaluate
\begin{align}
	\exp[-i2\phi(\op S^0_{3a,3b}+\op S^x_{3a,3b})] &= \exp(-i2\phi \op a^\dag_{3+}\op a_{3+}) \nonumber \\
	&= \op R_{3+}(2\phi)\,.
\end{align}
Since the quantum information in the dual-rail wire is carried in the symmetric~($+$) distributed qumode, this has the effect of phase shifting that information by~$2\phi$ after the teleportation. This verifies~(b) above.

The symmetry of (\ref{localsym}) thus implies that performing $e^{-i2\phi(\op S^0_{2a,2b}-\op S^x_{2a,2b})}$ is equivalent to applying $e^{-i2\phi(\op S^0_{3a,3b}+\op S^x_{3a,3b})}$ (illustrated in Figure  \ref{fig:localsymequiv}). The symmetry implies that applying a rotation gate can be described by two \textit{equivalent} scenarios: (a)~application of the unitary  $e^{-i2\phi(\op S^0_{2a,2b}-\op S^x_{2a,2b})}$ followed by measuring $\op q_{2a}$ and $\op p_{2b}$ and taking appropriate linear combinations of the results or~(b) measurements of $\op q_{2a}$ and $\op p_{2b}$ (and thus regular teleportation) followed by application of the unitary $ \op R_{3+}(2\phi)$ to the teleported information. The fact that this gate teleportation can be decomposed into regular teleportation followed by some unitary implies that this teleportation introduces the same amount of noise as performing ordinary teleportation. This was surprising when it was first shown in Ref.~\cite{Alexander:2014ew}, but now it can be seen to be a trivial result of the interferometric symmetry of the dual-rail wire shown in Fig.~\ref{fig:localsymequiv}.

\begin{figure}
\beginpgfgraphicnamed{graphics/cvwire}%
\begin{tikzpicture} [baseline=-2.5mm-0.5ex, x=1cm, y=5mm, inner sep=0pt, outer sep=0pt,scale=1.25,every node/.style={transform shape}]
	\def\lastn{2}
	\begin{scope}
	\clip ($ (.4,-1) - 0.55*(0,\micronodesize) $) rectangle ($ (\lastn+.6,0) + 0.55*(0,\micronodesize) $);
	\foreach \n in {0,...,\lastn}
	{
		\ifthenelse {\not\isodd{\n}}
			{
			\node (ia) [micro-even] at (\n,0) {};
			\node (ib) [micro-even] at (\n,-1) {};
			}
			{
			\node (ia) [micro-even] at (\n,0) {};
			\node (ib) [micro-even] at (\n,-1) {};
			}
		\node (oa) at (\n+1,0) {};
		\node (ob) at (\n+1,-1) {};
		\beamsplit {ia} {ib} {oa} {ob}
	}
		\ifthenelse {\isodd{\lastn}}
			{
			\node (ia) [micro-even] at (\lastn+1,0) {};
			\node (ib) [micro-even] at (\lastn+1,-1) {};
			}
			{
			\node (ia) [micro-even] at (\lastn+1,0) {};
			\node (ib) [micro-even] at (\lastn+1,-1) {};
			}
				\end{scope}

				\node at (-0.2,0.1) [anchor=north]{$a$};
				\node at (-0.2,-0.7) [anchor=north]{$b$};
	\node at (1,-2) [anchor=north]{$2$};
	\node at (2,-2) [anchor=north]{$3$};
	\node at (3,-0.5) {$=$};
		\def\lastn{6}
		\begin{scope}
		\clip ($ (4.4-1,-1) - 0.55*(0,\micronodesize) $) rectangle ($ (\lastn+.6-1,0) + 0.55*(0,\micronodesize) $);
		\foreach \n in {3,...,\lastn}
		{
			\ifthenelse {\not\isodd{\n}}
				{
				\node (ia) [micro-even] at (\n,0) {};
				\node (ib) [micro-even] at (\n,-1) {};
				}
				{
				\node (ia) [micro-even] at (\n,0) {};
				\node (ib) [micro-even] at (\n,-1) {};
				}
			\node (oa) at (\n+1,0) {};
			\node (ob) at (\n+1,-1) {};
			\beamsplit {ia} {ib} {oa} {ob}
		}
			\ifthenelse {\isodd{\lastn}}
				{
				\node (ia) [micro-even] at (\lastn+1,0) {};
				\node (ib) [micro-even] at (\lastn+1,-1) {};
				}
				{
				\node (ia) [micro-even] at (\lastn+1,0) {};
				\node (ib) [micro-even] at (\lastn+1,-1) {};
				}
	\end{scope}
		\draw[ line width = 0.5mm, left color=green, opacity=0.5](1,-0.5) ellipse (0.2 and 1.2);		

				\draw[line width = 0.5mm, left color=red, opacity=0.5](5,-0.5) ellipse (0.2 and 1.2);		
				\draw (1,-0.5) ellipse (0.2 and 1.2);
				\draw (5,-0.5) ellipse (0.2 and 1.2);
					\node at (-0.2,0.1) [anchor=north]{$a$};
					\node at (-0.2,-0.7) [anchor=north]{$b$};
		\node at (4,-2) [anchor=north]{$2$};
		\node at (5,-2) [anchor=north]{$3$};
	\endpgfgraphicnamed
\end{tikzpicture}
\caption{\label{fig:localsymequiv}Illustration of the local symmetry from (\ref{localsym}). Applying $e^{-i2\phi(\op S^0_{2a,2b}-\op S^x_{2a,2b})}$ (illustrated by the bold green ellipse) is equivalent to applying $e^{-i2\phi(\op S^0_{3a,3b}+\op S^x_{3a,3b})}$ (illustrated by the bold red ellipse).}
\end{figure}
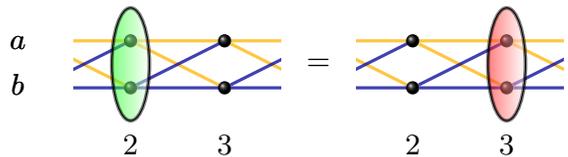

\
\subsection{Connection between CV cluster states and H-graph states}

A careful reading of Ref.~\cite{Alexander:2014ew} reveals that the quadratures prescribed for measurement in the scenario above are different from the ones presented here (even after accounting for the different conventions for the definition of positive phase shift). Specifically, Ref.~\cite{Alexander:2014ew} prescribes measuring $\op q_a^{\phi+\pi/4}$ and~$\op p_b^{\phi+\pi/4}$ to achieve the effect described above. Accounting for this difference is a different convention in talking about ``the dual-rail wire.''

Since the adjacency matrix~$\mat K$ for the dual-rail wire results from a bipartite, self-inverse H-graph, it is by now well established~\cite{Yokoyama2013,Menicucci2011a,Menicucci2011,Flammia2009,Menicucci2008} that this state can be transformed into a CV cluster state with the same basic form of the graph (for details, see references) simply by phase shifting all qumodes in one of the two graph bipartitions by $-\tfrac \pi 2$ (also known as a Fourier transform). In this case, it means performing this phase shift on all physical qumodes of all even (or all odd) macronodes of our version of the dual-rail wire~\cite{Menicucci2011a}.

The global symmetry illustrated in Fig.~\ref{fig:szglobal} shows, however, that by an argument analogous to Eq.~\eqref{localsym}, rather than phase shift half of the physical qumodes by $-\tfrac \pi 2$, one can choose instead to phase shift \emph{all} of the physical qumodes by $-\tfrac \pi 4$ to achieve the same effect.\footnote{If the graph has boundary conditions other than periodic, then one must modify this result for the ends of the chain, but the bulk of the chain remains unaffected by this, and thus we can ignore this detail without any loss of validity.} Therefore, in order to adapt the results above for use on the CV-cluster-state form of the dual-rail wire (used in Ref.~\cite{Alexander:2014ew}), we must first phase shift all modes by $+\tfrac \pi 4$ to convert to the H-graph-state form (used in the results above), measure $\op q_a^\phi$ and~$\op p_b^\phi$, and then phase shift all modes by $-\tfrac \pi 4$ to return to the CV-cluster-state form. All in all, this means measuring $\opvec x^\phi$ above is equivalent to measuring
\begin{align}
	 \op R\left (\frac \pi 4 \right)^\dag \opvec x^\phi \op R\left (\frac \pi 4 \right) = \opvec x^{\phi + \pi/4}
\end{align}
on the CV-cluster-state version. This agrees with the prescription given in Ref.~\cite{Alexander:2014ew}.

\section{Conservation laws for harmonic systems}\label{sec:groundstates}

As a second application of these results, in this section we interpret $\ket{\phi_{\mat K}}$ as the ground state of a two-body Hamiltonian $\op H(\mat K)$ and show that the Schwinger nullifiers commute with this Hamiltonian and thus reveal symmetries of the dynamics of the system. %
Recall that the following nullifier relation holds 
\begin{align}
(\opvec a-\mat K \opvec a^\dag)\ket{\phi_{\mat K}}=\vec 0\,.
\end{align}
We can multiply a nullifier by anything on the left and it remains a nullifier. We can therefore (as is done in \cite{Menicucci2011}) define annihilation and creation operators for the state defined by $\mat K$ as follows:
\begin{align}
\opvec a_{\mat K} &=\mat P^{1/2}(\opvec a-\mat K \opvec a^\dag)\,,\\
\opvec a^\herm_{\mat K} &=(\opvec a^\herm-\opvec a^\tp \mat K^*)\mat P^{1/2}\,,\label{ahk}
\end{align}
where $\mat P$ is some positive definite matrix. If we set $\mat P=(\mat \id-\mat K \mat K^*)^{-1}$, the ordinary commutation relations hold: $[\opvec{a}_{\mat K},\opvec a^\herm_{\mat K}]=\mat \id$. These operators can be used to derive the following (nonunique) Hamiltonian whose ground state is $\ket{\phi_{\mat K}}$:
\begin{align}\label{eq:HofK}
\op H(\mat K)=\opvec a^\herm_{\mat K}\opvec a_{\mat K}=
(\opvec a^\herm-\opvec a^\tp \mat K^*)\mat P(\opvec a-\mat K \opvec a^\dag)\,.
\end{align}
For any state $\ket{\phi_{\mat K}}$ and corresponding Schwinger nullifiers $\opvec a^\herm\mat M\opvec a$, the commutator $[\opvec a^\herm\mat M\opvec a,\op H(\mat K)]$ expands to
 
\begin{align}
[\opvec a^\herm \mat M\opvec a,\op H(\mat K)]
&=\opvec a^\herm \llbracket\mat M,\mat P \rrbracket\opvec a +\opvec a^\tp \llbracket\mat K^* \mat P \mat K, \mat M^* \rrbracket\opvec a^\dag \nonumber \\
&\quad -\opvec a^\herm(\mat M \mat P \mat K+\mat P \mat K \mat M^*)\opvec a^\dag\nonumber \\
&\quad +\opvec a^\tp(\mat M^* \mat K^* \mat P+\mat K^* \mat P \mat M)\opvec a\,,
\end{align} 
where $\llbracket \cdot, \cdot \rrbracket$ is the ordinary matrix commutator.\footnote{ Specifically, ${\llbracket \mat A, \mat B \rrbracket \coloneqq {\mat A\mat B - \mat B \mat A}}$. We need this new notation because we are following the conventions of Ref.~\cite{Menicucci2011}, which, among other things, defines the commutator for operator-valued matrices as
\begin{align}
	[\opmat A, \opmat B] \coloneqq \opmat A \opmat B - \Bigl(\opmat B^\tp \opmat A^\tp \Bigr)^\tp\,.	
\end{align}
Using this definition, $[\mat A, \mat B] = \mat 0$ for all c-number matrices even when $\llbracket \mat A, \mat B \rrbracket \neq \mat 0$. } Recall from Theorem~\ref{Main} that $\mat M \mat K = -(\mat M \mat K)^\tp$, $\mat K$ is symmetric, and $\mat M$ is Hermitian. Therefore, $\mat M \mat K = -\mat K\mat M^*$ and $\mat M^*\mat K^* = -\mat K^*\mat M$, which means that $\llbracket \mat M,\mat K\mat K^* \rrbracket=\mat 0$, and therefore $\llbracket\mat M,\mat P\rrbracket=\mat 0$. Using these relations to simplify the above shows that the commutator vanishes:
\begin{align}
	[\opvec a^\herm &\mat M\opvec a,\op H(\mat K)] = 0\,.
\end{align}
Therefore, for the quadratic Hamiltonian $\op H(\mat K)$ constructed from $\mat K$ by Eq.~\eqref{eq:HofK}, the Schwinger nullifiers of its ground state $\ket{\phi_{\mat K}}$ are conserved quantities for the dynamics of the system. \blk

As a simple example, consider the periodic dual-rail quantum wire on $2n$ qumodes with adjacency matrix $\mat K$  (introduced in Sec.~\ref{sec:qw}). Pairing qumodes vertically, this is equivalent to a ring of $n$ entangled Schwinger spins. Applying the results above to this state involves viewing the dual-rail quantum wire as the ground state of the two-body Hamiltonian $\op H(\mat K)=\opvec a^\herm_{\mat K}\opvec a_{\mat K}$.   As discussed in Section \ref{sec:glob}, there exists a global Schwinger nullifier which is a sum of $\op{S}^x$ operators acting on each spin in the system:
\begin{align}
\op S^x_{\text{tot}} \coloneqq \sum\limits_{i=1}^{n}\op{S}^x_{ia,ib}\,,\label{parent}
\end{align}
where $i$ denotes the Schwinger spin labelled $i$ along the horizontal direction and the $a$ and $b$ labels indicate the vertically paired qumodes. Thus, total spin along~$x$ is conserved in this system, as are each of the local 2-spin observables from Eq.~\eqref{localnuls}.  

In fact, we are not limited to quadratic Hamiltonians. One may add to $\op H(\mat K)$ other terms constructed solely from the nullifiers in order to produce a non-quadratic Hamiltonian with the same conservation laws but more complicated dynamics. (While $\ket{\phi_{\mat K}}$ will always remain an eigenstate of such a system, it may lose its privileged position as the ground state depending on the strength and signs of the additional terms.) We expect that this construction will open the doors to engineering bosonic systems that have particular desired properties in terms of their spin dynamics~\cite{Sridhar:2014jc}. 

\section{Conclusion}

We have presented a method for analyzing multimode Gaussian states in terms of their interferometric symmetries. These symmetries are generated by considering nullifiers for the state that are linear combinations of operators in the Schwinger representation of $\group U(2)$. In addition to producing a number of mathematical results, we applied this formalism to a straightforward analysis of noise in measurement-based quantum computation on the dual-rail quantum wire~\cite{Menicucci2011a,Yokoyama2013,Alexander:2014ew} and to the dynamics of harmonic and more general systems that have a Gaussian ground state.

As large-scale multimode Gaussian states become within reach in a compact laboratory setting~\cite{Alexander:2015wf,Chen:2014jx,Wang:2014im,Yokoyama:2013jp}, their representation~\cite{Menicucci2011} and analysis (this work) become all the more important. Furthermore, they are now becoming a testing ground for measurement-based quantum computing~\cite{Menicucci2006} and the creation and manipulation of topologically ordered quantum states~\cite{Demarie:2014jx}. As this research moves forward, the tools presented here---analysis in terms of number-conserving symmetries---will play an important role in the development of experimental protocols and theoretical understanding of Gaussian states and harmonic systems.

\acknowledgments
We are grateful for helpful discussions with Niranjan Sridhar, Olivier Pfister, Courtney Brell, and Rafael Alexander.  This work was supported by the Australian Research Council under grant No. DE120102204.
\appendix

\section{Proof of relationship between \texorpdfstring{$\mat K$}{K} and \texorpdfstring{$\mat G$}{G} and further simplification when \texorpdfstring{$\mat G$}{G} is self-inverse}\label{A1}
Noting the relationship between $\mat Z$ and the H-graph $\mat G$,
\begin{align}
\mat Z=ie^{-2\alpha\mat G}\,
\end{align}
and recalling the relationship between $\mat Z$ and $\mat K$, 
\begin{align}
\mat K=(\mat \id+i\mat Z)(\mat \id-i\mat Z)^{-1}\,, 
\end{align}
we can rewrite $\mat K$ as
\begin{align}
\mat K&=(\mat \id+i\mat Z)(\mat \id-i\mat Z)^{-1}\\
&=(\mat \id-e^{-2\alpha\mat G})(\mat \id+e^{-2\alpha\mat G})^{-1}\,.
\end{align}
Noting that $\mat G$ is diagonalizable and that 
\begin{align}
(1-e^{-2x})(1+e^{-2x})^{-1}=\tanh x\,,
\end{align}
we can write $\mat K$ as 
\begin{align}
\mat K&= %
\tanh(\alpha\mat G)\,.
\end{align}
Expanding this out in a Taylor series,
\begin{align}
\mat K &= %
\alpha\mat G-\frac 1 3(\alpha\mat G)^3 + \frac {2} {15} (\alpha\mat G)^5-\frac {17}{315} (\alpha\mat G)^7+\dotsm\,,
\end{align}
which converges for $|\alpha|<\frac{\pi}{2}$. 
If $\mat G$ is self-inverse, this simplifies to
\begin{align}
\mat K %
&= \left[ \alpha-\frac 1 3 \alpha^3 + \frac {2} {15} \alpha^5-\frac {17}{315} \alpha^7+\dotsm \right] \mat G
\nonumber\\
&\mathrel{=}(\tanh \alpha)\mat G\,.
\end{align}
\blk Even though the series expansion for $\tanh$ only holds when $|\alpha|<\frac{\pi}{2}$, since we are able to re-sum the series analytically, we can use analytic continuation to extend this result to all $\alpha\in\reals$.

\section{Derivation of relations for \texorpdfstring{$\mat K$}{K}}\label{A2}
The following relations between $\mat Z$ and $\mat K$ will be useful:
\begin{align}
	\frac 1 2 (\mat \id - i \mat Z)\phantom{^*} &= (\mat \id + \mat K)^{-1}\,, \\
	\frac 1 2 (\mat \id + i \mat Z^*) &= (\mat \id + \mat K^*)^{-1}\,, \\
	\frac 1 2 (\mat \id + i \mat Z)\phantom{^*} &= (\mat \id + \mat K^{-1})^{-1}\,, \\
	\frac 1 2 (\mat \id - i \mat Z^*) &= (\mat \id + \mat K^{-*})^{-1}\,.
\end{align}
To prove that ${\lVert\mat K\rVert<1}$, first we write $\mat U$ (defined in Section \ref{introducegraphicalcalculus}) in terms of $\mat K$ as follows:
\begin{align}
\mat U&=\frac{1}{2i}(\mat Z-\mat Z^*)\\
&=\frac{1}{2}(\mat \id-i\mat Z)+\frac{1}{2}(\mat \id+i\mat Z^*)-\mat \id\\
&=(\mat \id + \mat K)^{-1} + (\mat \id + \mat K^*)^{-1} - \mat \id \nonumber \\
&= (\mat \id + \mat K)^{-1} (\mat \id - \mat K\mat K^*) (\mat \id + \mat K^*)^{-1}\,.
\end{align}
Since $\mat K = \mat K^\tp$, and $(\mat \id + \mat K)$ is invertible, the condition that $\mat U > 0$ implies that $\mat \id - \mat K \mat K^* > 0$. This means that the eigenvalues of $\mat K\mat K^*$ are all less than~1. Since these eigenvalues are the squares of the singular values of~$\mat K$, the spectral norm condition follows.

\bibliographystyle{bibstyle}
\bibliography{bib,allrefs}

\end{document}